\newcommand*\dashline{\rotatebox[origin=c]{90}{$\dabar@\dabar@\dabar@\dabar@\dabar@\dabar@\dabar@$}}
\newcommand*{\rom}[1]{\expandafter\@slowromancap\romannumeral #1@}
\newtheorem{thm}{Theorem}[]
\newtheorem{cor}{Corollary}
\theoremstyle{remark}
\newtheorem{rem}{Remark}
\theoremstyle{definition}
\newtheorem{defn}{Definition}
\begin{document}

% paper title
% can use linebreaks \\ within to get better formatting as desired
%%%%%%%\title{On the Capacity of the Gaussian Cognitive Z-Interference Channel:\\ Less Noisy and More Capable CGZIC}
\title{The Capacity of More Capable Cognitive \\Interference Channels}

% author names and affiliations
% use a multiple column layout for up to three different
% affiliations
\author{\IEEEauthorblockN{Mojtaba Vaezi}
\IEEEauthorblockA{Department of Electrical and Computer Engineering\\
McGill University, Montreal, Quebec H3A 0E9, Canada\\
Email: mojtaba.vaezi@mail.mcgill.ca
}}

% use for special paper notices
%\IEEEspecialpapernotice{(Invited Paper)}

% make the title area
\maketitle

%----------------------------------------------------------------------

\begin{abstract}

We establish the capacity region for a class of discrete memoryless {\it cognitive interference channel} (DM-CIC)
called {\it cognitive-more-capable} channel, and we show that superposition coding is the optimal encoding technique.
This is the largest capacity region for the DM-CIC to date, as the
existing capacity results are explicitly shown to be its subsets.

%Since, superposition coding is the optimal strategy again.

%
%We explore fundamental limits of the discrete memoryless {\it cognitive interference channel} (DM-CIC) with
%two pairs of transmitter-receiver, in which the cognitive
%transmitter non-causally knows the message of the
%primary transmitter. It is known that superposition coding is optimal
%for several classes of cognitive channels,
%including the recently introduced {\it less noisy} DM-CIC \cite{vaezi2012lessnoisy}.
%In this work, we extend the results of less noisy DM-CIC to a broader class of DM-CIC
%known as more capable DM-CIC.
%Similar to the less noisy DM-CIC, %due to the inherent asymmetry of the cognitive channel,
%two different more capable channels are conceivable: the {\it primary-more-capable}
%and {\it cognitive-more-capable} channels. We establish the capacity region for the latter case, i.e., when
%the cognitive receiver is more capable than the primary receiver,
%and we show that superposition coding is the optimal encoding technique.
%This new result is the largest capacity region for the DM-CIC to date, as all
%existing capacity results are explicitly shown to be its subsets.
\end{abstract}

\begin{keywords}
Cognitive interference channel, superposition coding, cognitive-more-capable, more capable, less noisy.
\end{keywords}

\IEEEpeerreviewmaketitle

\section{Introduction}\label{sec:intro}

%{\let\thefootnote\relax\footnotetext{
%The author is with the Department of Electrical and Computer
%Engineering, McGill University, Montreal, QC H3A 0E9, Canada (e-mail: 	
%mojtaba.vaezi@mail.mcgill.ca). 	
%}}
Wireless communication systems are limited in performance and
capacity by interference.
The capacity region of the simplest interference channel \cite{Carleial}, i.e.,
the two-user interference channel, is still an open problem despite having been studied for several decades.
It is known only for a few special classes of interference channels, e.g.,
the strong and very strong interference \cite{Kramer}.

With ever-increasing demand for radio spectrum, improving the
spectral utilization in wireless communication systems is unavoidable.
Cognitive radio is recognized as a key enabling technology for this purpose \cite{haykin2005cognitive}.
Owing to the nodes which can sense the environment and
adapt their strategy based
on the network setup, cognitive radio technology is aimed
at increasing the spectral efficiency in wireless communication systems.
With this development in technology, networks with cognitive users are gaining prominence.
Such a communication channel can be modeled by interference channel
with cognition, which is simply known as the {\it cognitive channel} \cite{Devroye}.

Most of the recent work on the cognitive interference channel has focused on
the two user channel with one cognitive transmitter \cite{Devroye,Jovicic-Viswanath,Wu-Vishwanath,Maric2,Maric3,Rini2,vaezi2012lessnoisy}.
In this channel setting, one transmitter, known as the cognitive transmitter, has
non-causal access to the message transmitted by the other
transmitter (the primary transmitter). This is used to model an ``ideal"
cognitive radio.
We study the two-user {\it discrete memoryless} cognitive interference channel, too.
%with two transmitter-receiver pairs known as the primary and the cognitive pairs.
Fundamental limits of this channel have been explored for several years now. However,
the capacity of this channel remains unknown except for some special classes, e.g.,
in the ``weak interference'' \cite{Wu-Vishwanath} and ``strong interference''
\cite{Maric2} regimes.

%Capacity of the Gaussian cognitive interference channel (GCIC) is
%known at low interference \cite{Jovicic-Viswanath} and \cite{Wu-Vishwanath},
%as well as strong very interference  \cite{Maric1}.
%Besides, capacity of Gaussian cognitive Z-interference channel (GCZIC) in
%which the primary receiver is interfered by the cognitive
%transmitter, is known for several ranges of interference gain
%\cite{vaezi2011capacity}, \cite{JiangZ}, \cite{vaezi2011superposition}. While at low interference
%dirty paper coding \cite{Costa} is capacity-achieving scheme, at high
%interference superposition coding is the optimal technique.
%For the discrete memoryless channel, capacity
%is known for ``very strong interference'' \cite{Maric1}, in which both receivers can
%decode both messages, ``weak interference'' \cite{Wu-Vishwanath}, and ``better cognitive decoding'' \cite{Rini2} regimes.
%While independent coding and superposition coding, respectively, are capacity-achieving for the first and second cases,
%the last case uses rate-splitting, Gel'fand-Pinsker coding (binning), and superposition coding.

Inspired by the concept of the less noisy broadcast channel (BC) \cite[Capter 5]{ElGamal2011network}, the author introduced the notion
of {\it less noisy} DM-CIC in \cite{vaezi2012lessnoisy}. Because of the inherent
asymmetry of the cognitive channel, two different less noisy channels are distinguishable;
these are dubbed the {\it primary-less-noisy} and {\it cognitive-less-noisy} DM-CIC.
In the former, the primary receiver is
less noisy than the secondary receiver, whereas it is the opposite in the latter.
In this paper, we extend the work on
{\it less noisy} DM-CIC \cite{vaezi2012lessnoisy} to the {\it more capable} DM-CIC.
The notion of more capable  DM-CIC first appeared in \cite{vaezi2011capacity},
in which the primary receiver is more capable than the secondary one.
We observe that, similar to the less noisy DM-CIC, two different more capable
cognitive channels are conceivable: the {\it primary-more-capable} and {\it cognitive-more-capable} DM-CIC.
The former was studied in \cite{vaezi2011capacity}; the latter
is the subject of study in this work.

The main contribution of this paper is to establish a new capacity result
for the DM-CIC, i.e., capacity
region for the cognitive-more-capable DM-CIC. To this end, we first
propose a new outer bound for this channel; the outer bound
is developed from the outer bound introduced in \cite[Theorem 3.2]{Wu-Vishwanath}.
We then show that this outer bound is the same as an inner bound
which is based on superposition coding. Therefore, we characterize
the capacity region for the cognitive-more-capable DM-CIC and prove
that {\it superposition coding} is the capacity-achieving technique.

In the second part of this paper, we prove that this new capacity result is the
``largest'' capacity region for the DM-CIC to date. To prove this, we explicitly show that the existing
capacity results are subsets of this new capacity region. In fact, the capacity of
the cognitive-more-capable DM-CIC reduces to the capacity regions
in the so-called ``weak interference'' \cite{Wu-Vishwanath}, %\cite{Wu-Vishwanath,Rini2,vaezi2012comments},
``strong interference'' \cite{Maric2},
and ``less noisy" \cite{vaezi2012lessnoisy} regimes once the corresponding channel conditions are satisfied.
Finally, the relation among different capacity results of the DM-CIC is clarified in light of this work and \cite{vaezi2012comments}.
The analysis we provide in this paper sheds more light on the existing capacity results of the DM-CIC;
it makes clear how superposition coding is the capacity achieving technique in the previous results, too.
%Finally, We evaluate the capacity region for the corresponding Gaussian cognitive interference channel (GCIC)
%at the cognitive-more-capable regime.
%%Since, superposition coding is the optimal strategy again
%%this capacity result provides an alternative encoding

The rest of the paper is organized as follows.
The system model and definitions are presented in Section \ref{sec:models}.
Section \ref{sec:DM-CIC} provides the main result of this paper, which includes the capacity region for the cognitive-more-capable DM-CIC.
In Section \ref{sec:dis}, we show that the new capacity result includes all existing capacity results as subsets.
%Section \ref{sec:G} is devoted to the GCIC where we find its capacity at the cognitive-more-capable regime.
This is followed by conclusions in Section~\ref{sec:sum}.

%%%%%%%%%%%%%%%%%%%%%%%%%%%%%%%%%%%%%%%%%%%%%%%%%%%%%%%%%%%%%%%%%%%%%%%%%%%%%%%%%%%%%%%%%%%%%%%%%%%%%%%%%%%%%%%%%%%%%%%%%%%%%%%%%%%%%%%%%%%
%%%%%%%%%%%%%%%%%%%%%%%%%%%%%%%%%%%%%%%%%%%%%%%%%%%%%%%%%%%%%%%%%%%%%%%%%%%%%%%%%%%%%%%%%%%%%%%%%%%%%%%%%%%%%%%%%%%%%%%%%%%%%%%%%%%%%%%%%%%

\section{Preliminaries and Definitions}
\label{sec:models}

The two-user DM-CIC is an interference channel that consists of two
transmitter-receiver pairs, in which the {\it cognitive} transmitter
non-causally knows the message of the  {\it primary} user, in addition to its own message.
In what follows, we formally define DM-CIC and two special classes of that.

\subsection{Discrete memoryless cognitive interference channel}

\begin{figure}
\begin{center}
\scalebox {1}{
\begin{picture}(280,170)

\put(0,50){
\begin{picture}(200,80)
\put(20,20){\framebox(55,20){Encoder 2}}
\put(-10,30){\vector(1,0){30}}
\put(12,37){\makebox(0,0)[r]{$M_{2}$}}
%\put(-55,20){\framebox(45,20){Source 2}}
\put(75,30){\vector(1,0){30}}
\put(96,37){\makebox(0,0)[r]{$X^n_{2}$}}
\put(105,10){\framebox(40,90)}
\put (117,25){\rotatebox{90}{$p(y_1,y_2|x_1,x_2)$}}
\put(145,30){\vector(1,0){30}}
\put(165,37){\makebox(0,0)[r]{$Y^n_{2}$}}
\put(175,20){\framebox(55,20){Decoder 2}}
\put(230,30){\vector(1,0){30}}
\put(250,38){\makebox(0,0)[r]{$\hat{M}_{2}$}}
\end{picture}
}

\put(0,100){
\begin{picture}(200,80)
\put(20,20){\framebox(55,20){Encoder 1}}
\put(-10,30){\vector(1,0){30}}
\put(12,37){\makebox(0,0)[r]{$M_{1}$}}
%\put(-55,20){\framebox(45,20){Source 1}}
\put(75,30){\vector(1,0){30}}
\put(96,37){\makebox(0,0)[r]{$X^n_{1}$}}
%\put(105,10){\framebox(75,90){$p(y_1,y_2|x_1,x_2)$}}
\put(145,30){\vector(1,0){30}}
\put(165,37){\makebox(0,0)[r]{$Y^n_{1}$}}
\put(175,20){\framebox(55,20){Decoder 1}}
\put(230,30){\vector(1,0){30}}
\put(250,37){\makebox(0,0)[r]{$\hat{M}_{1}$}}
\put(5,30){\line(0,-1){25}} \put(5,5){\line(1,0){20}}
\put(25,5){\vector(0,-1){15}}
\end{picture}
}
\end{picture}
}
\end{center}

\vspace{-35pt}

\caption{The discrete memoryless cognitive
interference channel (DM-CIC) with two transmitters and two receivers.  $M_1, M_2$ are two messages,
$X_1, X_2$ are the inputs, $Y_1, Y_2$ are the outputs, and $p(y_1,y_2|x_1,x_2)$ is the transition probability of channel.}
  \label{fig:DM-CZIC}
\end{figure}
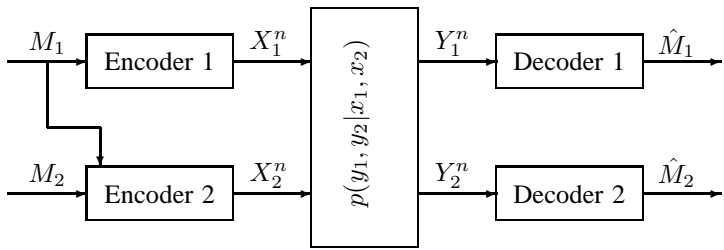

The DM-CIC is depicted in Fig.~\ref{fig:DM-CZIC}. Let $M_1$ and $M_2$
be two independent messages which are uniformly distributed on the sets of
all messages of the first and second users, respectively.  Transmitter $i$ wishes to transmit message $M_i$
to receiver $i$, in $n$ channel use at rate $R_i$, and $i=1,2$. Message $M_2$ is available only at transmitter 2, while both
transmitters know $M_1$. This channel is defined by a tuple $({\cal
X}_1,{\cal X}_2;p(y_1,y_2|x_1,x_2);{\cal Y}_1,{\cal Y}_2)$ where
${\cal X}_1,{\cal X}_2$ and ${\cal Y}_1,{\cal Y}_2$
are input and output alphabets, and $p(y_1,y_2|x_1,x_2)$ is channel transition probability density
functions.

%A special case of this channel, known as discrete memoryless cognitive
%Z-interference channel (DM-CZIC), arises when channel transition probability can be
%factorized as $p(y_1,y_2|x_1,x_2) = p(y_2|x_2)p(y_1|x_1,x_2)$. In such a case,
%interference is one sided. More precisely, the primary
%user does not interfere the secondary one. Similarly, if the interference
%link from the cognitive transmitter to the primary receiver does not exist a different DM-CZIC
%is realized; this has been studied in \cite{liu2009bounds}; we will not consider it
%in this paper.

The capacity of the DM-CIC is known in the ``cognitive less noisy'' \cite{vaezi2012lessnoisy}, ``strong interference'' \cite{Maric1},
``weak interference'' \cite{Wu-Vishwanath}, and ``better cognitive decoding'' \cite{Rini2} regimes.
These capacity results are listed in Table~\ref{table1}, and labeled $\mathcal C_I$, $\mathcal C_{II}$, $\mathcal C_{III}$,
and $\mathcal C_{III}^\prime$, respectively.
 In all above cases,
the cognitive receiver has a better condition (more information) than the primary one in some sense,
as it can be understood from the corresponding conditions in Table~\ref{table1}.
%In the second case, both receivers can decode both messages.
%%This is also correct in the first case.

\subsection{More Capable DM-CIC}

Since the second transmitter has complete and non-causal
knowledge of both messages, by sending the two messages, it can act like a {\it broadcast} transmitter.
Particularly, in the absence of the first transmitter this channel becomes the
well-known DM-BC \cite{Cover}. In the presence of the primary transmitter, this
channel is no longer a BC; however, similar to that in the DM-BC, one can define
conditions for which one of the receivers is in a ``better''
condition than the other one in decoding the messages, e.g., one receiver is {\it less noisy}
or {\it more capable} than the other \cite{ElGamal2011network}.

In \cite{vaezi2011capacity}, the authors extended this
notion to the DM-CIC, and studied
the case where the primary receiver is more capable than the cognitive receiver.
This led to the capacity of GCZIC at very strong interference.
In what follows, we show that similar to the less noisy DM-CIC \cite{vaezi2012lessnoisy},
and depending on which receiver is in the better condition than the other,
two different more capable DM-CIC arises. These two are formally defined in the following.

\begin{defn}
The DM-CIC is said to be {\it primary-more-capable}%\footnote{This definition first appeared in \cite{vaezi2011capacity}
%under the name of the ``more capable" DM-CIC.}
if
\begin{align}
I(X_1,X_2;Y_1) \geq I(X_1,X_2;Y_2)
\label{eq:defn1}
\end{align}
\label{cond1}
 for all $p(x_1,x_2)$.
\end{defn}

\begin{defn}
The DM-CIC is said to be {\it cognitive-more-capable} if
\begin{align}
I(X_1,X_2;Y_2) \geq I(X_1,X_2;Y_1)
\label{eq:defn2}
\end{align}
\label{cond2}
 for all $p(x_1,x_2)$.
\end{defn}

\noindent It can be noted that in the first case the primary
receiver has more information, about transmitted codewords, than the cognitive receiver whereas
the reverse is true in the second case.
Therefore, given the channel condition, a DM-CIC can be either in the
{\it primary-more-capable} or in the {\it cognitive-more-capable} regimes.
The former was studied in \cite{vaezi2011capacity}. In this paper,
we focus on the latter case.

%In the next section, we establish two inner bounds and an outer
% bound for the DM-CIC.
%Later, we will use these bounds to prove the capacity of the
%DM-CIC under specific conditions.

\section{Main Results}
\label{sec:DM-CIC}

In this section, we first introduce a new
outer bound on the capacity of the cognitive-more-capable DM-CIC.
We then find an alternative representation of this outer bound; the new representation
is the same as an achievable rate region for the DM-CIC which is based on superposition coding.
 Consequently, we establish the capacity region of the cognitive-more-capable DM-CIC in this section.

\subsection{New Outer Bounds}

The following provides an outer bound on the capacity of the cognitive-more-capable DM-CIC,
defined in \eqref{eq:defn2}.

\begin{thm}
Define $\mathcal{R}_o$ as the set of all rate pairs $(R_{1},R_{2})$ such that
\begin{subequations}\label{O1}
\begin{align}
R_1 &\leq I(U,X_1;Y_1), \label{O1:1}\\
R_1 + R_2 &\leq I(U,X_1;Y_1) + I(X_2;Y_2|U,X_1),\label{O1:2}\\
R_1 + R_2 &\leq I(X_1, X_2;Y_2),\label{O1:3}
\end{align}
\end{subequations}
for the probability distribution $p(u,x_1,x_2)p(y_1,y_2|x_1,x_2)$.
Then, for some $p(u,x_1,x_2)$, $\mathcal{R}_o$
provides an outer bound on the capacity region of the cognitive-more-capable
DM-CIC, defined by \eqref{cond2}.
\label{thm1}
\end{thm}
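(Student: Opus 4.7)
The plan is to combine Fano's inequality with a standard single-letterization via a time-indexed auxiliary random variable, and to invoke the cognitive-more-capable hypothesis to establish the third bound. Let $(R_1, R_2)$ be achievable, so that some sequence of $(2^{nR_1}, 2^{nR_2}, n)$ codes attains vanishing error probability at both receivers. Fano's inequality then yields $n R_i \le I(M_i; Y_i^n) + n\epsilon_n$ for $i=1,2$, with $\epsilon_n \to 0$. I would introduce $U_i \triangleq (M_1, Y_2^{i-1})$; since $X_1^n$ is a function of $M_1$, $X_{1i}$ is a function of $U_i$, and $M_1\perp M_2$ together with $X_1^n = f(M_1)$ gives $I(M_2;Y_2^n)\le I(M_2;Y_2^n\mid M_1,X_1^n)$. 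For \eqref{O1:1} I would chain-rule $I(M_1;Y_1^n) = \sum_i I(M_1;Y_{1i}\mid Y_1^{i-1})$ and upper bound each summand by $I(U_i, X_{1i}; Y_{1i})$. For \eqref{O1:2} I would combine this with a chain-rule decomposition of $I(M_2; Y_2^n\mid M_1, X_1^n)$ followed by the data-processing bound $I(M_2; Y_{2i}\mid M_1, X_1^n, Y_2^{i-1}) \le I(X_{2i}; Y_{2i}\mid U_i, X_{1i})$, which relies on the Markov chain $M_2 \to X_{2i} \to Y_{2i}$ given $(U_i, X_{1i})$ and the memorylessness of the channel.

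For \eqref{O1:3}, which is the crux and where the cognitive-more-capable hypothesis enters, I would start from $n(R_1+R_2) \le I(M_1;Y_1^n) + I(M_2;Y_2^n) + 2n\epsilon_n$, replace $I(M_1;Y_1^n)$ by the larger $I(M_1;Y_1^n\mid M_2)$ using $M_1\perp M_2$, and apply a K\"orner--Marton style argument that invokes the single-letter cognitive-more-capable condition per time index on the conditional distribution of $(X_{1i}, X_{2i})$ given $M_2$. The outcome is $I(M_1;Y_1^n\mid M_2) + I(M_2;Y_2^n) \le I(M_1, M_2; Y_2^n)$; data processing together with memorylessness then gives $I(M_1, M_2; Y_2^n) \le \sum_i I(X_{1i}, X_{2i}; Y_{2i})$. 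Finally, introducing a uniform time-sharing $Q \in \{1,\ldots,n\}$ independent of everything else and setting $U = (Q, U_Q)$, $X_1 = X_{1Q}$, $X_2 = X_{2Q}$, $Y_1 = Y_{1Q}$, $Y_2 = Y_{2Q}$ produces the single-letter bounds in \eqref{O1} as $n\to\infty$.

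The main obstacle is \eqref{O1:3}: the cognitive-more-capable hypothesis is a single-letter statement on $p(x_1, x_2)$, yet it must be lifted to the $n$-letter joint distribution induced by the code. The delicate step is the Csisz\'ar-sum manipulation that lets $I(M_1;Y_1^n\mid M_2)$ and $I(M_2;Y_2^n)$ collapse into $\sum_i I(X_{1i}, X_{2i}; Y_{2i})$ without double counting, so that the per-letter more-capable inequality $I(X_{1i}, X_{2i}; Y_{2i}) \ge I(X_{1i}, X_{2i}; Y_{1i})$ can be invoked at each time index. This is precisely where the outer bound of \cite[Theorem 3.2]{Wu-Vishwanath} must be sharpened to cover the cognitive-more-capable regime.
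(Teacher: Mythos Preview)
Your proposal is correct and follows essentially the same route as the paper: the first two inequalities are inherited from the Wu--Vishwanath outer bound, and the new sum-rate constraint \eqref{O1:3} is obtained by starting from $I(M_1;Y_1^n\mid M_2)+I(M_2;Y_2^n)$, applying the Csisz\'ar-sum identity with an auxiliary $V_i=(M_2,Y_1^{i-1},Y_{2,i+1}^n)$, and invoking the single-letter cognitive-more-capable condition on each conditional law of $(X_{1i},X_{2i})$ to pass from $Y_{1i}$ to $Y_{2i}$. The only caution is your sketch for \eqref{O1:1}: the bound $I(M_1;Y_{1i}\mid Y_1^{i-1})\le I(U_i,X_{1i};Y_{1i})$ with $U_i=(M_1,Y_2^{i-1})$ does not follow from the chain rule alone (there is no Markov chain $Y_1^{i-1}\to(M_1,Y_2^{i-1})\to Y_{1i}$ in general), so that step should simply defer to the converse in \cite[Theorem~3.2]{Wu-Vishwanath}, exactly as the paper does.
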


\begin{proof}
The proof is provided in Section~\ref{anx1}.
\end{proof}

\begin{thm}
Let $\mathcal{R}_o^\prime$ be the set of all rate pairs $(R_{1},R_{2})$ such that
\begin{subequations}\label{O2}
\begin{align}
R_1 &\leq I(U,X_1;Y_1), \label{O2:1}\\
R_2 &\leq  I(X_2;Y_2|U,X_1),\label{O2:2}\\
R_1 + R_2 &\leq I(X_1, X_2;Y_2),\label{O2:3}
\end{align}
\end{subequations}
for the probability distribution $p(u,x_1,x_2)p(y_1,y_2|x_1,x_2)$.
Then $\mathcal{R}_o^\prime \equiv \mathcal{R}_o$, that is, $\mathcal{R}_o^\prime$
gives another representation of $\mathcal{R}_o$ and makes an outer bound on the capacity region of the cognitive-more-capable
DM-CIC, for some $p(u,x_1,x_2)$.
\label{thm2}
\end{thm}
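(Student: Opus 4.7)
The plan is to establish $\mathcal{R}_o^\prime \equiv \mathcal{R}_o$ by proving both set inclusions. The forward inclusion $\mathcal{R}_o^\prime \subseteq \mathcal{R}_o$ is immediate: for any $(R_1,R_2) \in \mathcal{R}_o^\prime$ witnessed by some $p(u,x_1,x_2)$, bounds \eqref{O1:1} and \eqref{O1:3} coincide with \eqref{O2:1} and \eqref{O2:3}, while adding \eqref{O2:1} to \eqref{O2:2} produces exactly \eqref{O1:2}. The same auxiliary $U$ therefore certifies $(R_1,R_2) \in \mathcal{R}_o$.

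For the reverse inclusion $\mathcal{R}_o \subseteq \mathcal{R}_o^\prime$, which is the substantive direction, I would fix $(R_1,R_2) \in \mathcal{R}_o$ with witnessing $p(u,x_1,x_2)$ and construct a new auxiliary $\tilde U$ having the same marginal $p(x_1,x_2)$, so that \eqref{O2:3} is inherited from \eqref{O1:3}. If the original $U$ already satisfies $R_2 \leq I(X_2;Y_2|U,X_1)$, the inclusion holds with $\tilde U = U$. Otherwise, \eqref{O1:1} combined with \eqref{O1:2} gives $I(U,X_1;Y_1) - R_1 \geq R_2 - I(X_2;Y_2|U,X_1) > 0$, i.e., slack in \eqref{O1:1} that can be traded into headroom for \eqref{O2:2}. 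I would implement the trade-off via time-sharing: introduce a Bernoulli random variable $Q$ with $\Pr(Q=1)=\alpha$, independent of everything else, and set $\tilde U = (Q,U_Q)$ where $U_1 = U$ and $U_0$ is constant. Conditioning on $Q$ yields $I(\tilde U,X_1;Y_1) = \alpha I(U,X_1;Y_1) + (1-\alpha) I(X_1;Y_1)$ and $I(X_2;Y_2|\tilde U,X_1) = \alpha I(X_2;Y_2|U,X_1) + (1-\alpha) I(X_2;Y_2|X_1)$, giving a one-parameter family along which the first quantity decreases and the second increases as $\alpha$ moves from $1$ to $0$, so one tries to pick $\alpha$ meeting both \eqref{O2:1} and \eqref{O2:2} simultaneously.

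The main obstacle is showing that a feasible $\alpha \in [0,1]$ always exists. Summing the two target inequalities and using the identity $I(X_2;Y_2|X_1) - I(X_2;Y_2|U,X_1) = I(U;Y_2|X_1)$ reduces feasibility to a comparison between $I(U;Y_1|X_1)$ and $I(U;Y_2|X_1)$. The cognitive-more-capable hypothesis $I(X_1,X_2;Y_2) \geq I(X_1,X_2;Y_1)$, combined with the chain-rule expansion $I(X_1,X_2;Y_j) = I(U,X_1;Y_j) + I(X_2;Y_j|U,X_1)$ for $j=1,2$, must be invoked at this step to control the relevant difference and close the argument; this is where the hypothesis of Definition~\ref{cond2} enters in an essential way. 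Making this last inequality chain go through cleanly under only the more-capable assumption (rather than the stronger cognitive-less-noisy condition already treated in \cite{vaezi2012lessnoisy}) is the technical crux of the proof.
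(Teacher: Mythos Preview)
Your route is genuinely different from the paper's, and the proposal has a real gap at exactly the point you flag.

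The paper does not construct a time-sharing auxiliary. Instead it does a case split on whether, for the given $p(u,x_1,x_2)$, one has $I(U,X_1;Y_1)+I(X_2;Y_2|U,X_1)\ge I(X_1,X_2;Y_2)$ or the reverse. In the first case \eqref{O1:2} is redundant in $\mathcal{R}_o$ and the paper argues that \eqref{O2:2} is likewise redundant in $\mathcal{R}_o'$, so both systems collapse to \eqref{O1:1},\eqref{O1:3}. In the second case the third constraint is redundant in both systems, and the paper simply invokes El~Gamal's broadcast-channel argument (with $(U,X_1)$ playing the role of the BC auxiliary) to conclude that the pentagon region $\{R_1\le a,\,R_1+R_2\le a+b\}$ and the rectangle region $\{R_1\le a,\,R_2\le b\}$ have the same convex hull once one takes the union over auxiliaries. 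The crucial point is that \emph{neither case uses the cognitive-more-capable hypothesis}; condition \eqref{eq:defn2} enters only through Theorem~\ref{thm1} to make $\mathcal{R}_o$ an outer bound, not in the region equivalence $\mathcal{R}_o\equiv\mathcal{R}_o'$ itself.

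Your proposal, by contrast, ends in an acknowledged gap: you reduce feasibility of $\alpha$ to a comparison of $I(U;Y_1|X_1)$ with $I(U;Y_2|X_1)$ and then assert that \eqref{eq:defn2} ``must be invoked'' to finish. But the more-capable condition only controls $I(X_1,X_2;Y_j)$; it gives no direct handle on $I(U;Y_j|X_1)$ for a general auxiliary $U$, so the inequality chain you anticipate does not follow from \eqref{eq:defn2}. The difficulty you have isolated is precisely what the El~Gamal corner-point argument sidesteps: rather than interpolating between $U$ and a constant and hoping the endpoint $(I(X_1;Y_1),\,I(X_2;Y_2|X_1))$ cooperates, one shows that every boundary point of the pentagon family is already a corner of some rectangle in $\mathcal{R}_o'$, a purely region-theoretic fact that needs no channel hypothesis. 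Your two-inclusion framing and the easy direction $\mathcal{R}_o'\subseteq\mathcal{R}_o$ are fine, but for the reverse inclusion you should replace the unfinished $\alpha$-interpolation with the El~Gamal rate-transfer argument the paper cites.
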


\begin{proof}
To prove this we consider the following two cases:\\
\textit{case 1:}  when \eqref{O1:2} is redundant in $\mathcal{R}_o$, i.e.,
\begin{align}
I(U,X_1;Y_1) + I(X_2;Y_2|U,X_1) \geq I(X_1, X_2;Y_2).
\label{ineq1}
\end{align}
\textit{case 2:}  when \eqref{O1:3} is redundant in $\mathcal{R}_o$, i.e.,
\begin{align}
I(U,X_1;Y_1) + I(X_2;Y_2|U,X_1) \leq I(X_1, X_2;Y_2).
\label{ineq2}
\end{align}

In the first case, we can see that \eqref{O2:2} becomes redundant also.
This is because the right-hand side of \eqref{O2:2} is greater than or equal to
the difference between the right-hand sides of \eqref{O2:3} and \eqref{O2:1} if
 \eqref{ineq1} holds. Consequently, the remaining constraints in
$\mathcal{R}_o$ and $\mathcal{R}_o^\prime$ are the same, and $\mathcal{R}_o^\prime \equiv \mathcal{R}_o$.

In the second case, it is obvious that the third inequality is redundant both in \eqref{O1} and \eqref{O2}.
Therefore, the set of constraints in Theorem~\ref{thm1} reduces to
\begin{subequations}\label{O3}
\begin{align}
R_2 &\leq I(U,X_2;Y_2), \label{O3:1}\\
R_1 + R_2 &\leq I(U,X_2;Y_2) + I(X_1;Y_1|U,X_2).\label{O3:2}
\end{align}
\end{subequations}
Similarly, the set of constraints in Theorem~\ref{thm2} reduces to
\begin{subequations}\label{O4}
\begin{align}
R_1 &\leq I(X_1;Y_1|U,X_2),\label{O4:1}\\
R_2 &\leq I(U,X_2;Y_2). \label{O4:2}
\end{align}
\end{subequations}
Let $\mathcal{R}_{o1}$ denote the union of all rate pairs $(R_{1},R_{2})$  that satisfy \eqref{O3:1}-\eqref{O3:2}
and $\mathcal{R}_{o1}^\prime$ be the union of all rate pairs $(R_{1},R_{2})$ that satisfy \eqref{O4:1}-\eqref{O4:2};
%In \cite {vaezi2012comments}, we prove that $\mathcal{R}_{o1} \equiv  \mathcal{R}_{o1}^{\prime}$.
we show that $\mathcal{R}_{o1} \equiv  \mathcal{R}_{o1}^{\prime}$.
Intuitively, the convex hull of these two regions is the same since the corner point of both regions, which remains after applying the convex hull operation,
are exactly the same.
More formally, using the same argument as El Gamal~\cite{ElGamalMoreCapable}, we can see
that any point on the boundary of  $\mathcal{R}_{o1}^\prime$ is also on the boundary of  $\mathcal{R}_{o1}$.
This is because $R_{2}$ can be thought of as the rate of common message that can be decoded at both receivers
while $R_{1}$ is the rate of the private message. Now $(R_{2}, R_{1}) \in \mathcal{R}_{o1}^{\prime} $
if and only if $(R_{2}-t, R_{1}+t) \in \mathcal{R}_{o1}^{\prime} $ for any
$0 \leq t \leq R_{2}$. In other words, the common rate $R_{2}$ can be
partly or wholly private. Thus region $\mathcal{R}_{o1}^{\prime}$
can be represented as $\mathcal{R}_{o1}$, i.e., $\mathcal{R}_{o1} \equiv  \mathcal{R}_{o1}^{\prime}$.\footnote{
Similarly, as stated in \cite[Chapter 5]{ElGamal} when proving the capacity of less noisy BC, the (convex hull of) region
\begin{align*}
R_2 &\leq I(U;Y_2), \\
R_1 + R_2 &\leq I(U;Y_2) + I(X_1;Y_1|U),
\end{align*}
is an alternative characterization of
\begin{align*}
R_1 &\leq I(X_1;Y_1|U),\\
R_2 &\leq I(U;Y_2),
\end{align*}
for some for some $p(u, x)$. By replacing $U$ with $(U,X_2)$ in these two regions
we will get $\mathcal{R}_{o1}$ and $\mathcal{R}_{o1}^\prime$, respectively.
}

Therefore, the proof of Theorem~\ref{thm2} is completed as in the both cases $\mathcal{R}_o^{\prime} \equiv  \mathcal{R}_o$.
\end{proof}

It should be indicated that the outer bounds in Theorem~\ref{thm1} and Theorem~\ref{thm2}
are valid only for the cognitive-more-capable DM-CIC,
defined in \eqref{eq:defn2}. However,
if we remove the third inequalities (i.e., \eqref{O1:3} and \eqref{O2:3}) in those sets of inequalities,
the remaining constraints in each set provide outer bounds for any DM-CIC.
Therefore, as a corollary of Theorem~\ref{thm2} we have
\begin{cor}
The set of all rate pairs $(R_1, R_2)$ such that
\begin{subequations}\label{Outer2}
\begin{align}
R_1 &\leq I(U,X_1;Y_1), \label{Osimple:1}\\
R_2 &\leq  I(X_2;Y_2|U,X_1),\label{Osimple:2}
\end{align}
\end{subequations}
for some $p(u,x_1,x_2)$ provides an outer bound on the capacity region of the DM-CIC.
\label{cor0}
\end{cor}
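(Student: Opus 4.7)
My plan is to read off Corollary~\ref{cor0} from the converse argument that proves Theorem~\ref{thm2} (equivalently Theorem~\ref{thm1}) by isolating the portion of that argument which does not invoke the cognitive-more-capable hypothesis \eqref{eq:defn2}. Of the three inequalities in $\mathcal{R}_o^\prime$, only the sum-rate bound \eqref{O2:3} requires \eqref{eq:defn2}; the individual-rate bounds \eqref{O2:1} and \eqref{O2:2} come from a generic single-letter converse and should hold on every DM-CIC. Thus, removing \eqref{O2:3} yields an outer bound valid for an arbitrary DM-CIC, which is precisely the statement of Corollary~\ref{cor0}.

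Concretely, I would introduce the auxiliary $U_i := (M_1, Y_1^{i-1})$; since $X_{1i}$ is a deterministic function of $M_1$, it is a function of $U_i$. For \eqref{Osimple:1}, Fano's inequality on the $M_1$-decoder together with the chain rule and the standard step of enlarging $I(M_1;Y_{1i}\mid Y_1^{i-1})$ to $I(M_1,Y_1^{i-1};Y_{1i}) = I(U_i,X_{1i};Y_{1i})$, followed by the usual time-sharing argument, produces the desired single-letter bound. No channel-structure assumption is used anywhere in this chain, so the bound is valid for every DM-CIC.

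For \eqref{Osimple:2}, I would start from $n R_2 \le I(M_2; Y_2^n\mid M_1) + n\epsilon_n$, which follows from $M_1 \perp M_2$ and Fano on the $M_2$-decoder. The next step — expressing the right-hand side as $\sum_i I(X_{2i}; Y_{2i}\mid U_i, X_{1i})$ — uses the chain rule, memorylessness of the channel (which collapses $H(Y_{2i}\mid M_1,M_2,\cdot)$ to $H(Y_{2i}\mid X_{1i},X_{2i})$), the fact that $X_{2i}$ is a function of $(M_1,M_2)$ so that $M_2 \to X_{2i} \to Y_{2i}$ conditionally on $U_i, X_{1i}$, and the Csisz\'ar sum identity to trade the past-$Y_2$ conditioning that arises naturally for the past-$Y_1$ conditioning carried by $U_i$. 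Once again, \eqref{eq:defn2} never enters.

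The main obstacle is this final single-letterization of \eqref{Osimple:2}: the natural Fano expansion of $I(M_2;Y_2^n\mid M_1)$ conditions on $Y_2^{i-1}$ at step $i$, while the auxiliary $U_i$ carries $Y_1^{i-1}$, and swapping these cleanly requires careful Csisz\'ar-sum bookkeeping that I expect to mirror the manipulation used in the proof of Theorem~\ref{thm1} for inequality \eqref{O1:2}, where the same combination $I(U,X_1;Y_1) + I(X_2;Y_2\mid U,X_1)$ already appears. Once both single-letter bounds are established without invoking the cognitive-more-capable hypothesis, they hold for every DM-CIC, completing Corollary~\ref{cor0}.
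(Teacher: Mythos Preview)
Your high-level reading is exactly right: the corollary is obtained by stripping away the one piece of the outer-bound argument that invokes the cognitive-more-capable hypothesis, namely the sum-rate constraint \eqref{O2:3}. Where your plan diverges from the paper --- and where it runs into trouble --- is in how you propose to certify the surviving constraint \eqref{Osimple:2}.

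The paper never proves the individual bound $R_2 \le I(X_2;Y_2\mid U,X_1)$ by a direct Fano/single-letter converse. What is proved directly (in \cite[Theorem~3.2]{Wu-Vishwanath}) is the pair \eqref{O1:1}--\eqref{O1:2}, i.e.\ the bound on $R_1$ together with the \emph{sum-rate} bound $R_1+R_2 \le I(U,X_1;Y_1)+I(X_2;Y_2\mid U,X_1)$. The passage from this pair to the pair \eqref{Osimple:1}--\eqref{Osimple:2} is then accomplished at the level of \emph{regions}, via the El~Gamal rate-transfer argument used in Case~2 of the proof of Theorem~\ref{thm2}: the union over $p(u,x_1,x_2)$ of $\{R_1\le A,\; R_1+R_2\le A+B\}$ coincides with the union of $\{R_1\le A,\; R_2\le B\}$, because any point achieved with auxiliary $U$ in the first description is matched by some (possibly different) auxiliary in the second. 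That region-equivalence step is what yields \eqref{Osimple:2}, not a single-letterization of $I(M_2;Y_2^n\mid M_1)$.

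Your concrete route with $U_i=(M_1,Y_1^{i-1})$ would require
\[
\sum_i H(Y_{2i}\mid M_1,Y_2^{i-1}) \;\le\; \sum_i H(Y_{2i}\mid M_1,Y_1^{i-1}),
\]
and there is no channel-agnostic reason for this to hold. The Csisz\'ar sum identity does not ``swap'' past-$Y_2$ conditioning for past-$Y_1$ conditioning inside a single sum; it equates two sums with \emph{crossed} past/future variables, and is the tool used to prove the sum-rate bound \eqref{O1:2}, not an individual bound on $R_2$. So the step you flag as ``the main obstacle'' is in fact not surmountable with this auxiliary, and the fix is precisely the paper's: prove \eqref{O1:1}--\eqref{O1:2} (where Csisz\'ar sum \emph{does} do its job) and then invoke the region-equivalence argument of Theorem~\ref{thm2} to recast them as \eqref{Osimple:1}--\eqref{Osimple:2}.
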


\noindent Corollary~\ref{cor0} gives a simpler and more tractable representation of the outer bound introduced in \cite[Theorem 3.2]{Wu-Vishwanath}.

We next provide an achievable rate regions for the DM-CIC.

\subsection{An Achievable Rate Region}
\label{inner}
The following theorem gives an achievable rate regions for the DM-CIC.

\begin{thm}
The union of rate regions given by
\begin{subequations}\label{I1}
\begin{align}
R_1 &\leq I(W,X_1;Y_1), \label{I1:1}\\
R_2 &\leq I(X_2;Y_2|W,X_1),\label{I1:2}\\
 R_1 + R_2 &\leq I(X_1,X_2;Y_2),\label{I1:3}
\end{align}
\end{subequations}
is achievable for the DM-CIC, where the union is over all probability distributions $p(w,x_1, x_2)$.
\label{thm3}
\end{thm}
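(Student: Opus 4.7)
The plan is to establish \eqref{I1} by a superposition coding scheme with joint typicality decoding at both receivers, treating $W$ as a cloud center carried by the common message $M_1$ and $X_2$ as the satellite codeword carrying $M_2$. The key observation driving the coding scheme is that the cognitive transmitter knows $M_1$, so it can condition its codeword on the same auxiliary $W(M_1)$ used by the primary transmitter, and that the sum-rate constraint \eqref{I1:3} will arise from the joint error event at receiver~2 in which both messages are decoded incorrectly.

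Concretely, I would fix a joint distribution $p(w,x_1,x_2)$ and, for block length $n$, generate $2^{nR_1}$ i.i.d.\ sequences $w^n(m_1)$ according to $\prod_i p(w_i)$. For each $m_1$, I would generate $x_1^n(m_1)$ conditionally according to $\prod_i p(x_{1,i}\mid w_i(m_1))$, and, for each $m_1$ and each $m_2\in\{1,\dots,2^{nR_2}\}$, a satellite codeword $x_2^n(m_1,m_2)$ drawn i.i.d.\ according to $\prod_i p(x_{2,i}\mid w_i(m_1),x_{1,i}(m_1))$. The primary encoder transmits $x_1^n(M_1)$ and the cognitive encoder, which knows both messages, transmits $x_2^n(M_1,M_2)$. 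Decoder~1 looks for the unique $\hat m_1$ with $(w^n(\hat m_1),x_1^n(\hat m_1),Y_1^n)$ jointly typical, while decoder~2 looks for the unique pair $(\hat m_1,\hat m_2)$ with $(w^n(\hat m_1),x_1^n(\hat m_1),x_2^n(\hat m_1,\hat m_2),Y_2^n)$ jointly typical.

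The error analysis is then routine: by the joint typicality lemma, the wrong-$m_1$ event at receiver~1 is driven down provided $R_1<I(W,X_1;Y_1)$, giving \eqref{I1:1}. At receiver~2 there are two nontrivial error events. In the first, $\hat m_1=M_1$ but $\hat m_2\neq M_2$; averaging over the codebook this requires $R_2<I(X_2;Y_2\mid W,X_1)$, giving \eqref{I1:2}. In the second, $\hat m_1\neq M_1$ (and hence $\hat m_2$ may also differ), which requires $R_1+R_2<I(W,X_1,X_2;Y_2)$; since $W\to(X_1,X_2)\to Y_2$ forms a Markov chain under the chosen factorization, this collapses to $R_1+R_2<I(X_1,X_2;Y_2)$, yielding \eqref{I1:3}. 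Standard time-sharing / convex-hull and expurgation arguments then upgrade the vanishing average error probability to vanishing maximal error probability for rates in the interior of the region, and taking the union over all $p(w,x_1,x_2)$ completes the argument.

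The main obstacle, more conceptual than technical, is getting the right bookkeeping in the joint error event at receiver~2: one must be careful that the codeword $x_2^n(m_1,m_2)$ is generated conditional on $w^n(m_1)$ and $x_1^n(m_1)$, so that when $\hat m_1\neq M_1$ the triple $(W^n(\hat m_1),X_1^n(\hat m_1),X_2^n(\hat m_1,\hat m_2))$ is independent of $Y_2^n$ and distributed according to $\prod_i p(w_i,x_{1,i},x_{2,i})$, which is exactly what is needed to produce the bound $I(W,X_1,X_2;Y_2)$ before the Markov simplification. Once this is set up correctly, the remainder is a direct invocation of the packing lemma, and no new ideas beyond classical superposition coding are required.
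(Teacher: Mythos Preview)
Your proposal is correct and follows essentially the same approach as the paper: superposition coding with $(W,X_1)$ as cloud centers and $X_2$ as satellites, with receiver~1 decoding only $M_1$ and receiver~2 jointly decoding both messages via typicality. The paper's proof is in fact only a one-paragraph sketch of exactly this scheme (pointing to \cite{vaezi2012lessnoisy} for details), so your write-up is, if anything, more complete than what appears here.
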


\begin{proof}
%This achievable region uses superposition encoding at the
%cognitive transmitter, as the cognitive user knows the primary user's message.
The proof of Theorem~\ref{thm3} uses the superposition coding idea in which
$Y_1$ can only decode $M_1$
while $Y_2$ (the more capable receiver) is intended to decode both $M_1$ and $M_2$.
Considering the space of all codewords, one can
view the $(W, X_1)$ as {\it cloud centers}, and the $X_2$ as {\it satellites} \cite{Kramerbook}.
The decoding is based on joint typicality.
The details of the proof can be found in \cite{vaezi2012lessnoisy}.
\end{proof}

\begin{rem}
The achievable region in Theorem~\ref{thm3} is a subset of
the achievable region in \cite[Theorem 7]{Rini2}. This can be shown by setting
$U=U_{1c}, X_1= U_{1pb}, X_2 = U_{2c}=U_{2pb}$
and using the Fourier-Motzkin elimination to simplify the region.
Note that the indices 1 and 2 need to be swapped.
%Further details can be found in Section~\ref{anx2}.

\end{rem}

\subsection{The Capacity of the Cognitive-More-Capable DM-CIC}
\label{sec:cap}

The capacity region of the cognitive-more-capable DM-CIC is established immediately in light
of the outer bound in Theorem~\ref{thm2} and the inner bound in Theorem~\ref{thm3}. That is,
the region define by $\mathcal{R}_o^\prime$, or equivalently the rate region characterized
in Theorem~\ref{thm3}, gives the capacity region of the DM-CIC when \eqref{eq:defn2} holds.
\begin{thm}
For the cognitive-more-capable DM-CIC defined in \eqref{eq:defn2},
the capacity region is given by the set of all rate pairs $(R_1, R_2)$ such that
\begin{subequations}\label{C2}
\begin{align}
R_1 &\leq I(W,X_1;Y_1), \label{C2:1}\\
R_2 &\leq I(X_2;Y_2|W,X_1),\label{C2:2}\\
R_1 + R_2 &\leq I(X_1,X_2;Y_2),\label{C2:3}
\end{align}
\end{subequations}
for some $p(w,x_1, x_2)$.
\label{thm4}
\end{thm}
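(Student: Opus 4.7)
The plan is essentially to combine the two bounds already established in the preceding subsections, since they have been written in deliberately matching form. First I would observe that Theorem~\ref{thm2} provides an outer bound $\mathcal{R}_o^\prime$ on the capacity region of the cognitive-more-capable DM-CIC, characterized by the three inequalities \eqref{O2:1}--\eqref{O2:3} over auxiliary distributions $p(u,x_1,x_2)$. Then I would invoke Theorem~\ref{thm3}, which asserts that the region defined by \eqref{I1:1}--\eqref{I1:3} is achievable for \emph{any} DM-CIC (in particular for the cognitive-more-capable subclass), over auxiliary distributions $p(w,x_1,x_2)$.

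The key observation is that these two regions have identical functional form: identifying the auxiliary variables $U$ and $W$, the outer bound expressions $I(U,X_1;Y_1)$, $I(X_2;Y_2|U,X_1)$, and $I(X_1,X_2;Y_2)$ coincide with the inner bound expressions $I(W,X_1;Y_1)$, $I(X_2;Y_2|W,X_1)$, and $I(X_1,X_2;Y_2)$, and both are parametrized by the same family of joint distributions $p(\cdot,x_1,x_2)$ with the same Markov structure. Taking the union over all such distributions on the achievability side, and noting that the outer bound is also taken as a union over all such distributions, yields matching regions and hence the capacity characterization in \eqref{C2:1}--\eqref{C2:3}.

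The only point requiring brief justification is that no additional condition in the outer bound of Theorem~\ref{thm2} (beyond the cognitive-more-capable hypothesis \eqref{eq:defn2}) is needed for Theorem~\ref{thm3}'s achievability to apply; this is immediate because Theorem~\ref{thm3} was stated for the general DM-CIC. One should also remark that the alphabet of $U$ (equivalently $W$) can be taken finite by a standard support-lemma argument, so the union defining the capacity region is over a well-defined set of distributions.

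I do not anticipate a serious obstacle here: the substantive technical work has already been done in proving Theorems~\ref{thm1}, \ref{thm2}, and~\ref{thm3}. The mild subtlety, if any, will be in stating cleanly that the outer bound of Theorem~\ref{thm2} (which a priori is only known to hold under the cognitive-more-capable condition due to the derivation of \eqref{O1:3}) is matched by the inner bound of Theorem~\ref{thm3} under the very same condition, so that the two bounds coincide on precisely the class of channels for which the capacity is claimed. Hence the proof reduces to a one-line assertion that outer and inner bounds agree, and Theorem~\ref{thm4} follows.
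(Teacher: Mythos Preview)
Your proposal is correct and follows essentially the same approach as the paper: the paper likewise obtains Theorem~\ref{thm4} immediately by observing that the outer bound $\mathcal{R}_o^\prime$ of Theorem~\ref{thm2} and the superposition-coding inner bound of Theorem~\ref{thm3} have identical form once $U$ and $W$ are identified. Your additional remarks (that Theorem~\ref{thm3} applies to any DM-CIC, and the support-lemma comment) are sound refinements but not present in the paper's brief treatment.
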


Theorem~\ref{thm4} gives the largest capacity region for the DM-CIC channel to date.
We prove this in the next section by showing that this capacity result contains all
existing capacity results of the DM-CIC channel as its subsets.

\begin{table*}[tb]
\renewcommand{\arraystretch}{1.1}
\caption{Summary of existing and new capacity results for the DM-CIC.
The subscripts 1 and 2, respectively, denote the primary and secondary (cognitive) users.\rlap{\textsuperscript{*}}} \label{table1}
\centering
\scalebox{.99}{
\begin{tabular}{|c|c|c|c|c|}
\hline
\bfseries Label &\bfseries DM-CIC class &\bfseries Condition & \bfseries Capacity region &  \bfseries Reference \\
\hline\hline\

$\mathcal C_I$ & \mbox {cognitive-less-noisy} & $ I(U;Y_1) \leq I(U;Y_2) $ & $ R_1 \leq I(U;Y_1) $  & \cite{vaezi2012lessnoisy} \\
$  $ &  & $ $ & $ R_2 \leq I(X_2;Y_2|U) $   &  \\
\hline

$\mathcal C_{II}$  & \mbox {strong interference} &  $I(X_1,X_2;Y_1) \leq I(X_1,X_2;Y_2)$ & $  R_1  \leq I(X_1;Y_1) $  &  \cite{Maric2} \\
$ $ & & $ I(X_2;Y_2|X_1) \leq I(X_2;Y_1|X_1) $ & $ R_2 \leq I(X_2;Y_2|X_1) $   &  \\
\hline

$\mathcal C_{III}$ & \mbox {weak interference} & $ I(X_1;Y_1) \leq I(X_1;Y_2) $ & $ R_1 \leq I(U,X_1;Y_1) $ & \cite{Wu-Vishwanath} \\
 $ $ & & $ I(U;Y_1|X_1) \leq I(U;Y_2|X_1)  $ & $ R_2 \leq I(X_2;Y_2|U,X_1) $  &  \\
\hline

$$  & & $ $ & $ R_1 \leq I(U,X_1;Y_1) $   &  \\
$ \mathcal C_{III}^\prime$ & \mbox {better-cognitive-decoding} & $  I(U,X_1;Y_1) \leq I(U,X_1;Y_2) $ & $ R_2 \leq I(X_2;Y_2|X_1) $ & \cite{Rini2} \\
$ $ & & $ $ & $ R_1 + R_2 \leq I(U,X_1;Y_1) + I(X_2;Y_2|U,X_1) $ &   \\
\hline

$$  & & $ $ & $ R_1 \leq I(U,X_1;Y_1) $   &  \\
$ \mathcal C_{IV}$ & \mbox {cognitive-more-capable} & $  I(X_1,X_2;Y_1) \leq I(X_1,X_2;Y_2) $ & $ R_2 \leq I(X_2;Y_2|U,X_1) $   & Theorem~\ref{thm4} \\
$ $ & & $ $ & $ R_1 + R_2 \leq I(X_1, X_2;Y_2) $ &   \\
\hline

\end{tabular}}
\begin{flushleft}
 \qquad  \qquad \scriptsize\textsuperscript{*} It should be emphasized that $\mathcal C_{III}^\prime \equiv \mathcal C_{III}$ \cite{vaezi2012comments} and $\mathcal C_I   \subseteq \mathcal C_{II} \subseteq \mathcal C_{III}  \subseteq \mathcal C_{IV}$.
\end{flushleft}
\end{table*}

\section{Comparison and Classification}
\label{sec:dis}
In this section, we compare the capacity region obtained in Theorem~\ref{thm4}
with all of the previously known capacity results for the DM-CIC.
For ease of comparison, these results are summarized in Table~\ref{table1}.
We show that the capacity region of the cognitive-more-capable DM-CIC contains
all other capacity regions listed in Table~\ref{table1}, as subsets.
We also clarify the relation between the other capacity results.
More precisely, we prove that
\begin{align}
\mathcal C_I  \subseteq \mathcal C_{II} \subseteq  \mathcal C_{III}  \equiv \mathcal C_{III}^\prime \subseteq \mathcal C_{IV}.
\label{eq:P}
\end{align}

\noindent We first observe that
\begin{align}
\mathcal C_I  \subseteq \mathcal C_{III}^\prime \subseteq \mathcal C_{IV}.
\label{eq:P1}
\end{align}
 This is evident by conditions corresponding to $\mathcal C_I, \mathcal C_{III}^\prime,  \mathcal C_{IV}$  in Table~\ref{table1}, because
\begin{align*}
I(U;Y_1) &\leq I( U;Y_2) \quad \forall \;  p(u) \\
\Rightarrow  \; I(U,X_1;Y_1) &\leq I( U,X_1;Y_2) \quad \forall \;  p(u,x_1) \\
\Rightarrow \;  I(X_2,X_1;Y_1) &\leq I(X_2,X_1;Y_2) \quad \forall \; p(u,x_1,x_2).
\end{align*}
We next prove that
\begin{align}
\mathcal C_I  \subseteq \mathcal C_{II}.
\label{eq:P2}
\end{align}

\noindent To show this we resort to a different representation of the ``strong interference'' condition, for which $\mathcal C_{II}$ hold.
From \cite[eq. (87)-(88)]{Maric2} we know that the DM-CIC is  in the ``strong interference'' regime if
\begin{subequations}\label{def3eq}
\begin{align}
I(X_2;Y_2|X_1) &\leq I(X_2;Y_1|X_1), \label{def3eq:first}\\
I(X_1, X_2;Y_1) &\leq I(X_1, X_2;Y_2),\label{def3eq:second}
\end{align}
for all $p(x_1,x_2)$.
\end{subequations}
Also, from \cite[eq. (8a)-(8b)]{vaezi2012comments} we know that this set of conditions is equivalent to
\begin{subequations}\label{MYK2}
\begin{align}
I(U;Y_1|X_1) &= I(U;Y_2|X_1), \label{MYK2:first}\\
I(X_1;Y_1) &\leq I(X_1;Y_2),\label{MYK2:second}
\end{align}
for all $p(u,x_1,x_2)$.
\end{subequations}

\begin{figure}[!tb]
\begin{center}
\scalebox {1.1}{
\begin{tikzpicture}
\draw[black, thick, rotate=0] (-4,2.7) rectangle (4,-2.7);
\draw[blue, thick, rotate=0] (0,0) ellipse (110pt and 65pt);
\draw[red, dashed, thick, scale=1.5,rotate=0] (0,0)ellipse (58pt and 32pt);
\draw[black, dotted, thick, scale=1.5,rotate=0] (0,0)ellipse (33pt and 8pt);
\draw[gray, densely dotted, thick, scale=1.5,rotate=0] (0,0)ellipse (45pt and 18pt);
\draw  (0,2.8)node[right, above] {DM-CIC};
\draw  (0,1.6)node[right, above] {{\color[rgb]{.1,0.1,1} cognitive-more-capable}};
\draw  (0,.85)node[right, above]  {{\color[rgb]{1,0.1,.1} better-cognitive-decoding}};
\draw  (0,-0.25)node[right, above] { cognitive-less-noisy} ;
\draw  (0,.35)node[right, above] {{\color[rgb]{.5,.5,.5} strong interference}} ;
\end{tikzpicture}
}
\end{center}
\caption{The class of the discrete memoryless cognitive interference channels (DM-CIC).
The cognitive receiver is {\it superior} than the primary receiver for the cognitive-more-capable and all its subclasses.
The largest ellipse (blue, solid line) represents the cognitive-more-capable DM-CIC. The ellipse with red, dashed lines
represents the better-cognitive-decoding DM-CIC; note that, this regime is equivalent to weak interference regime. The other
two ellipses, i.e., dotted and densely dotted ellipses, respectively show the cognitive-less-noisy and strong interference DM-CIC.}
\label{fig:CICclass}
\end{figure}
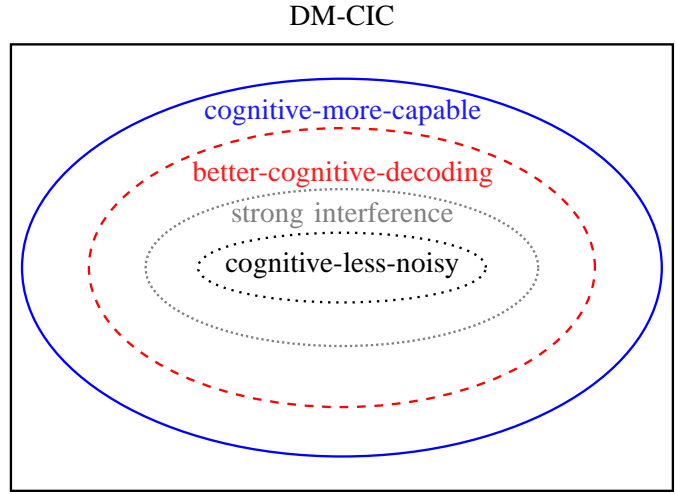

\noindent Now, in light of \eqref{MYK2}, it is straightforward to show that the condition required for the cognitive-less-noisy
regime, i.e.,
\begin{align}
 I(U;Y_1) \leq I(U;Y_2),
\label{eq:V}
 \end{align}
implies the conditions required for the strong interference regime.
To prove this we show that \eqref{eq:V} implies both \eqref{MYK2:first} and \eqref{MYK2:second}.
First, we see that if  $ I(U;Y_1) \leq I(U;Y_2)$ holds for any $p(u,x_1,x_2)$ then
we obtain $ I(U;Y_1|X_1) \leq I(U;Y_2|X_1)$ for all $p(u,x_1,x_2)$, thus $\eqref{eq:V} \Rightarrow \eqref{MYK2:first}$.
Also, for $U=X_2$ the condition in \eqref{eq:V} reduces \eqref{MYK2:second},
i.e., $\eqref{eq:V} \Rightarrow \eqref{MYK2:second}$. Hence, the condition required for the cognitive-less-noisy
regime implies that of the strong interference regime; this means that \eqref{eq:P2} holds.
Finally, by virtue of \eqref{eq:P1} and \eqref{eq:P2} and the fact that
$ \mathcal C_{II} \subseteq \mathcal C_{III}\equiv \mathcal C_{III}^\prime$
(see \cite[Claim 1]{vaezi2012comments}) it is obvious that \eqref{eq:P} is correct.
%The parts $\mathcal C_{III}  \equiv \mathcal C_{III}^\prime$ and $ \mathcal C_{II} \subseteq \mathcal C_{III}$  are proved in \cite{vaezi2012comments}.
In words, for a DM-CIC the followings are correct:
\begin{enumerate}
\item The better-cognitive-decoding and weak interference are equivalent (see \cite{vaezi2012comments}).
%\footnote{In \cite{vaezi2012comments} it is shown that the conditions required
%for a DM-CIC to be in the ``better-cognitive-decoding" and ``weak interference" regimes are equal.
%As a result, we can simplify representation of these two capacity results in the following form.
%\begin{pro}
%For a DM-CIC satisfying $I(U,X_1;Y_1) \leq I(U,X_1;Y_2)$,
%the capacity region is given by the set of all rate pairs $(R_1, R_2)$ such that
%\begin{subequations}\label{P2}
%\begin{align}
%R_1 &\leq I(U,X_1;Y_1), \label{P2:1}\\
%R_2 &\leq I(X_2;Y_2|U,X_1),\label{P2:2}
%\end{align}
%\end{subequations}
%for some $p(u,x_1, x_2)$.
%\label{pro1}
%\end{pro}
%}
%\item If a DM-CIC is in the strong interference regime then it is in the better-cognitive-decoding regime (see \cite{Rini2}).
\item If a DM-CIC is in the strong interference regime then it is in the cognitive-more-capable regime, as well.
\item If a DM-CIC is in the cognitive-less-noisy regime then it is in the strong interference regime.
\item A better-cognitive-decoding DM-CIC is cognitive-more-capable, too.
\end{enumerate}

Note that, the converse of statements 2, 3, and 4 does not hold in general.
Figure~\ref{fig:CICclass} represents these relations, pictorially.
In light of the above classifications, the constraints characterizing the
capacity region of the cognitive-more-capable DM-CIC (i.e., $\mathcal C_{IV}$)
can be use to represent the capacity region of all other classes of the DM-CIC listed in Table~\ref{table1}.
\begin{rem}
The constraints in Theorem~\ref{thm4} provide the capacity region of the DM-CIC
at the cognitive-less-noisy, strong interference, weak interference, better-cognitive-decoding, and cognitive-more-capable regimes.
Furthermore, when
the condition corresponding to each one of those subclasses holds, $\mathcal C_{IV}$ reduces to
the corresponding capacity result.\footnote{To better appreciate this, we may think of the two well-known
``less noisy" and ``more capable" BC and the relation between their capacity.
% in the following example.
%\begin{ex}
We know that the condition required for a less noisy BC implies that of more capable BC \cite{ElGamal2011network}.
That is, any less noisy BC is more capable also. Hence, the capacity of
more capable BC reduces to that of less noisy BC once the condition required for less noisy
BC is met. This is clear from the capacity regions of these two channels \cite{ElGamal2011network}.
%\end{ex}
}
For one thing, if a cognitive-more-capable DM-CIC further satisfies $I(U,X_1;Y_1) \leq I(U,X_1;Y_2)$
then it easy to see that the third constraint in $\mathcal C_{IV}$ becomes redundant, and the capacity region corresponding to
the better-cognitive-decoding, is achieved.
Hence $\mathcal C_{IV}$ unifies the representation of the capacity results for the DM-CIC  in different regimes.

\end{rem}

\begin{rem}
Superposition coding is optimal for several classes of the DM-CIC
for which the cognitive receiver is superior than the primary one, as we detailed in this section.
It is, however, not optimal in general since requiring the cognitive
receiver to recover both messages (even though non-uniquely for the primary's message)
can excessively constraint the achievable rate region.
\end{rem}

\begin{rem}
All of the classes defined in Table~\ref{table1} and depicted in Fig.~\ref{fig:CICclass},
imply the superiority of the cognitive receiver the primary one.
It is worth noting that, by swapping the indices
1 and 2 in the conditions, we can define similar classes in which the primary receiver
is superior than the cognitive one. One may expect similar capacity results in the new cases
by using superposition encoding in a different order. But it cannot come true because the factorization of probability
distribution $p(u,x_1, x_2)$ is different since the primary encoder
does not know $x_2$; thus, similar rate regions are not attainable over
general distribution $p(u,x_1, x_2)$.
This has been noted in \cite{vaezi2011capacity}.
\end{rem}

\section{Conclusions}
\label{sec:sum}

We have established the capacity of a new class of DM-CIC, named the cognitive-more-capable
DM-CIC, which gives the largest capacity region for the DM-CIC up to now.
This is proved by showing that all previously known capacity regions, for the DM-CIC,
are subsets of this new result, which is obtained by using superposition coding at the cognitive transmitter.
The analysis of the other capacity results of the DM-CIC shows that
superposition coding is the capacity-achieving techniques in those cases, too.
Besides, we  make a logical link between the different capacity results for this channel.
This sheds more light on the existing capacity results and unifies all of them under the
capacity region in the new regime.
%Finally, we find the capacity region for the cognitive-more-capable GCIC.
%This proves that, in the  cognitive channel, superposition coding can be the optimal coding strategy even at weak interference.

\section{Appendix}
\label{anx}

\subsection{Proof of Theorem \ref{thm2}}
\label{anx1}
The first two constraints in this outer bound (i.e., \eqref{O1:1} ,\eqref{O1:2}), which
make an outer bound on the capacity of any DM-CIC, are proved
in \cite[Theorem 3.2]{Wu-Vishwanath}. Here, we prove
that the last constraint \eqref{O1:3} holds for the cognitive-more-capable DM-CIC,
i.e., a DM-CIC that satisfies \eqref{cond2}.
%The proof follows the same footstep as the proof of the last inequality
%in \cite[Theorem 2]{vaezi2011superposition}; the only difference is that
%\cite[Theorem 2]{vaezi2011superposition} is proved for the primary-more-capable DM-CIC
%while we are dealing with the cognitive-more-capable DM-CIC.
%For completeness we write the details.
To do so, we can bound the rates $R_2 + R_2$ as
\begin{align}
 n(R_1 +  R_2)  &=  H(M_1, M_2) \nonumber \\
  & =  H(M_1| M_2)+ H(M_2) \nonumber \\
  & =  I(M_1;Y^n_1|M_2) + H(M_1|Y^n_1,M_2) \nonumber \\
  &\quad +  I(M_2;Y^n_2)  + H(M_2|Y^n_2) \nonumber \\
  & \leq I(M_1;Y^n_1|M_2) + I(M_2;Y^n_2)  + n\epsilon_{n} \label{eq:F-2}
 \end{align}
 where (\ref{eq:F-2}) follows by Fano's inequality.

Next, we bound the mutual information terms on the right-hand side of the inequality in (\ref{eq:F-2}).
%\small
\begin{subequations}\label{cnv}
\begin{align}
  I(&M_1;Y^n_1|M_2) + I(M_2;Y^n_2) \nonumber \\
  =& \sum^n_{\substack{i=1}}I(M_1; Y_{1i}|M_2,Y^{i-1}_{1}) + \sum^n_{\substack{i=1}}I(M_2;Y_{2i}|Y^n_{2,i+1}) \label{cnv:1} \\
  \leq& \sum^n_{\substack{i=1}}I(M_1, Y^n_{2,i+1}; Y_{1i}|M_2,Y^{i-1}_{1}) + \sum^n_{\substack{i=1}}I(M_2, Y^n_{2,i+1};Y_{2i})\nonumber \\
  =& \sum^n_{\substack{i=1}}I(M_1, Y^n_{2,i+1}; Y_{1i}|M_2,Y^{i-1}_{1})   \nonumber \\ &
  + \sum^n_{\substack{i=1}}I(M_2, Y^n_{2,i+1},Y^{i-1}_{1};Y_{2i}) - \sum^n_{\substack{i=1}}I(Y^{i-1}_{1};Y_{2i}|M_2,Y^n_{2,i+1})\nonumber \\
  =& \sum^n_{\substack{i=1}}I(M_1;  Y_{1i}|M_2,Y^{i-1}_{1},Y^n_{2,i+1}) +
   \sum^n_{\substack{i=1}}I(M_2, Y^n_{2,i+1},Y^{i-1}_{1};Y_{2i}) \nonumber \\ & - \sum^n_{\substack{i=1}}I(Y^{i-1}_{1};Y_{2i},|M_2,Y^n_{2,i+1}) %\nonumber \\ &
   + \sum^n_{\substack{i=1}}I(Y^n_{2,i+1};Y_{1i},|M_2,Y^{i-1}_{1}) \nonumber \\
  =& \sum^n_{\substack{i=1}}I(M_1; Y_{1i}|V_{i}) +  \sum^n_{\substack{i=1}}I(V_i;Y_{2i})  \label{cnv:4} \\
   \leq& \sum^n_{\substack{i=1}}I(X_{1i},X_{2i}; Y_{1i}|V_{i}) + \sum^n_{\substack{i=1}}I(V_i;Y_{2i}) + n\epsilon_{n} \label{cnv:5} \\
     \leq& \sum^n_{\substack{i=1}}I(X_{1i},X_{2i}; Y_{2i}|V_{i}) + \sum^n_{\substack{i=1}}I(V_i;Y_{2i}) + n\epsilon_{n} \label{cnv:6} \\
       % \label{eq:b1}
   = & \sum^n_{\substack{i=1}}I(X_{1i},X_{2i}; Y_{2i})+ n\epsilon_{n} \nonumber
\end{align}
\end{subequations}

\normalsize
\noindent in which \eqref{cnv:1}  follows by the chain rule;
\eqref{cnv:4}  follows by the Csiszar sum identity and the
auxiliary random variable $V_i \triangleq (M_2, Y^{i-1}_{1}, Y^n_{2,i+1})$;
\eqref{cnv:5} follows from  $M_1 \rightarrow (X_1, X_2)
\rightarrow Y_1$; \eqref{cnv:6} follows from the cognitive-more-capable condition in \eqref{eq:defn2} that gives
$I(X_1,X_2;Y_2) \geq I(X_1,X_2;Y_1)$, and implies that
$I(X_1,X_2; Y_1|V) \leq I(X_1,X_2; Y_2|V)$.

Next, we define the time sharing random variable $Q$
which is uniformly distributed over $[1:n]$ and is independent of
$(M_1,M_2,X^n_1,X^n_2,Y^n_1,Y^n_2)$. Also we define
$X^n_1 = X^n_{1Q}$, $X^n_2 =X^n_{2Q}$, and $Y^n_2=Y^n_{2Q}$.
 Then we have
\begin{align*}
 n(R_1 +  R_2)  &\leq  \sum^n_{\substack{i=1}}I(X_{1i},X_{2i}; Y_{2i})+ n\epsilon_{n} \\
  & =nI(X_1,X_2;Y_2|Q) + n\epsilon_{n}\\
   &\leq nI(X_1,X_2;Y_2) + n\epsilon_{n}.  \label{eq:Rsum1}
 \end{align*}

\noindent But $\epsilon_{n} \rightarrow 0$, as $n \rightarrow \infty $,
because the probability of error is assumed to vanish.
This completes the proof.

\end{document}